\newtheorem{theorem}{Theorem}
\newtheorem*{theorem*}{Theorem}
\newtheorem{conjecture}[theorem]{Conjecture}
\newtheorem{lemma}[theorem]{Lemma}
\newcommand{\Q}{\mathbb{Q}}
\newcommand{\ft}{\mathcal T}
\newcommand{\fc}{\mathcal C}
\newcommand{\keywords}[1]{\par\addvspace\baselineskip
\noindent{\bf Keywords:}\enspace\ignorespaces#1}
\title{Hitting all maximum cliques with a stable set using lopsided independent transversals}
\author{Andrew D.\ King\thanks{Department of Industrial Engineering and Operations Research, Columbia University, New York.  Email: $\mathtt{andrew.d.king@gmail.com}$.  Research supported by an NSERC Postdoctoral Fellowship.}}
\begin{document}

\maketitle

\begin{abstract}
Rabern recently proved that any graph with $\omega \geq \frac 34(\Delta+1)$ contains a stable set meeting all maximum cliques.  We strengthen this result, proving that such a stable set exists for any graph with $\omega > \frac 23(\Delta+1)$.  This is tight, i.e.\ the inequality in the statement must be strict.  The proof relies on finding an independent transversal in a graph partitioned into vertex sets of unequal size.
\keywords{maximum clique, stable set, graph colouring, independent transversal, independent system of representatives.}
\end{abstract}


\section{Introduction and motivation}

When colouring a graph $G$, we often desire a stable set $S$ meeting every maximum clique.  For example, finding such a set $S$ efficiently is the key to colouring perfect graphs in polynomial time \cite{reed01b}.  Proving the existence of $S$ has also been very useful in attacking Reed's $\omega$, $\Delta$, and $\chi$ conjecture\footnote{$\omega$, $\Delta$, and $\chi$ denote the clique number, maximum degree, and chromatic number of a graph, respectively.}:

\begin{conjecture}[Reed \cite{reed98}]
For any graph $G$, $\chi(G)\leq \lceil \frac 12(\Delta(G)+1+\omega(G))\rceil$.
\end{conjecture}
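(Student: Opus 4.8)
The plan is to prove the conjecture by strong induction on $|V(G)|$, splitting into two regimes according to the size of $\omega(G)$ relative to $\Delta(G)+1$. Fix a threshold $\beta\in(0,1)$ and treat separately the \emph{dense} case $\omega(G)>\beta(\Delta(G)+1)$ and the \emph{sparse} case $\omega(G)\le\beta(\Delta(G)+1)$. In either case the target is the same: colour $G$ with at most $\lceil\tfrac12(\Delta+1+\omega)\rceil$ colours, equivalently to \emph{save} $\lfloor\tfrac12(\Delta+1-\omega)\rfloor$ colours below the trivial greedy bound $\Delta+1$. The whole conjecture then reduces to making both regimes deliver the exact half-of-the-gap saving, and to choosing $\beta$ so that they cover all graphs.

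For the dense case I would take $\beta=\tfrac23$ and run the stable-set machinery of this paper. Given $\omega>\tfrac23(\Delta+1)$, a stable set $S$ meeting every maximum clique exists; I would colour $S$ with one fresh colour and recurse on $G'=G-S$. Since $S$ meets every maximum clique we get $\omega(G')\le\omega(G)-1$ while $\Delta(G')\le\Delta(G)$, so the induction hypothesis colours $G'$ at the cost of one extra colour. The delicate point is bookkeeping: a single step lowers $\tfrac12(\Delta+1+\omega)$ by only $\tfrac12$, so when $\Delta+\omega$ is odd the naive recursion overshoots the target by one. One route to repair this is to demand that the peeled stable set additionally meet a neighbour of every maximum-degree vertex, so that $\Delta$ falls in step with $\omega$; alternatively one recurses on the combined potential $\tfrac12(\Delta+1+\omega)$ only while $\omega$ stays above the threshold and dispatches the boundary configurations directly. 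Making this run cleanly across the entire range $\omega>\tfrac23(\Delta+1)$ is the technical core of the dense case, and is exactly where lowering Rabern's $\tfrac34$ to $\tfrac23$ buys the extra room.

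For the sparse case I would use the probabilistic naive-colouring method of Molloy and Reed. When $\omega\le\tfrac23(\Delta+1)$ no neighbourhood is close to a clique, so each neighbourhood carries many non-edges; assigning every vertex a uniform random colour from a palette of size roughly $(1-\gamma)(\Delta+1)$, resolving conflicts by the Lovász Local Lemma, and iterating via the semi-random ``nibble'' colours $G$ with fewer than $\Delta+1$ colours, the saving being proportional to the local sparsity. This yields a bound of the shape $\chi(G)\le\lceil(1-\epsilon)(\Delta+1)+\epsilon\,\omega\rceil$ for an absolute constant $\epsilon>0$, which is the sparse-regime analogue of what is needed.

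The hard part, and the reason the conjecture remains genuinely out of reach, is that these two regimes do not meet at the right strength. The probabilistic bound provably saves only an $\epsilon$-fraction of the gap $\Delta+1-\omega$ with $\epsilon$ far below $\tfrac12$, so it falls short of the target throughout the sparse regime, whereas the stable-set peeling attains the exact saving but only when $\omega$ is a constant fraction of $\Delta$. Closing the conjecture therefore demands pushing the probabilistic constant all the way to the optimal $\epsilon=\tfrac12$, i.e.\ an essentially tight control of $\chi$ in terms of local structure, which is beyond present technique. What this paper realistically secures is the widening of the dense regime that peeling handles, via the stable-set theorem for $\omega>\tfrac23(\Delta+1)$; the sparse and intermediate regimes, where half the gap cannot yet be saved, remain the open heart of the problem.
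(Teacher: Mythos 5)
There is a genuine gap here, and it is unavoidable: the statement you were asked to prove is Reed's conjecture, which is an \emph{open problem}. The paper does not prove it and does not claim to; it cites the conjecture only as motivation, and your own write-up concedes the same thing in its final paragraph (``the conjecture remains genuinely out of reach''). A proposal whose sparse regime admittedly ``falls short of the target throughout'' is a research-programme sketch, not a proof, so there is nothing in either your text or the paper that establishes the statement. What the paper actually proves is the dense-regime tool: every graph with $\omega > \frac 23(\Delta+1)$ contains a stable set meeting all maximum cliques (its Theorem \ref{thm:main}), whose only logical bearing on Reed's conjecture is via the minimal-counterexample argument in the introduction --- it forces any minimum counterexample to satisfy $\omega \leq \frac 23(\Delta+1)$, nothing more.

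Two points on the details of your dense case. First, the ``delicate'' parity issue you raise is already resolved by the device the paper uses: extend the stable set $S$ meeting all maximum cliques to a \emph{maximal} stable set $S'$. Then every vertex of $G-S'$ has a neighbour in $S'$, so $\Delta(G-S')\leq \Delta(G)-1$ as well as $\omega(G-S')\leq\omega(G)-1$; the quantity $\Delta+1+\omega$ drops by two and the ceiling drops by a full unit per peel, so no extra condition on hitting neighbours of maximum-degree vertices is needed. Second, and more seriously, your recursion cannot stay inside the dense regime: if $\omega > \frac 23(\Delta+1)$ and you peel a maximal such $S'$, the new graph satisfies only $\omega' > \frac 23(\Delta'+1)-\frac 13$, so a single step can cross the threshold and strand you in the sparse case, where (as you say) no known technique saves half the gap. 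This is exactly why the peeling argument yields only the contrapositive constraint on minimal counterexamples rather than a proof of the conjecture, and why the correct reading of this paper is as a strengthening of that constraint (from Rabern's $\frac 34$ to the optimal $\frac 23$), not as progress toward closing the sparse regime.
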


No minimum counterexample $G$ to this conjecture has such a stable set $S$.  For if it did, it would contain a maximal stable set $S'$ meeting every maximum clique, and we would have $\lceil \frac 12(\Delta(G-S')+1+\omega(G-S'))\rceil +1 \leq \lceil \frac 12(\Delta(G)+1+\omega(G))\rceil$.  Since $S'$ is a stable set, $\chi(G)\leq \chi(G-S')+1$, contradicting the minimality of $G$.

Thus such a stable set $S$ is highly desirable when attacking Reed's conjecture for a hereditary class of graphs.  The proof of Reed's conjecture for line graphs \cite{kingrv07} exemplifies the general approach:  If the maximum degree and clique number are far apart, a combination of previously known results suffices.  If they are not far apart, we can use the structure of line graphs to prove the existence of a stable set $S$ meeting all maximum cliques.

Rabern \cite{rabern09} recently proved that if the maximum degree and clique number are close enough, we need not consider the structure of the graph class at all:

\begin{theorem}[Rabern]\label{thm:rabern}
If a graph $G$ satisfies $\omega(G)\geq \frac 34(\Delta(G)+1)$, then $G$ contains a stable set meeting all maximum cliques.
\end{theorem}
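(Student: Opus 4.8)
The plan is to produce the stable set as an independent transversal (an independent system of representatives) for the family of maximum cliques of $G$, after first recording how the hypothesis forces maximum cliques to be large, to have few outgoing edges, and to overlap heavily whenever they meet.

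First I would extract the two quantitative facts that the inequality $\omega \ge \frac34(\Delta+1)$ buys us. A vertex $v$ of a maximum clique $K$ is adjacent to the other $\omega-1$ vertices of $K$ and has degree at most $\Delta$, so it has at most $(\Delta+1)-\omega$ neighbours outside $K$; call this its \emph{outside degree}. Symmetrically, if two maximum cliques $K$ and $K'$ meet, any common vertex is adjacent to all of $(K\cup K')\setminus\{v\}$, which forces $|K\cap K'| \ge 2\omega-\Delta-1 \ge \frac12(\Delta+1)$: intersecting maximum cliques share more than half of their vertices. A short inclusion--exclusion then shows that any two or three pairwise-intersecting maximum cliques still have a large common part, so the intersection structure of maximum cliques is very rigid.

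The reduction I would use is a Haxell-type independent transversal theorem (a partition into parts of size at least $2\Delta$ forces an independent transversal): if the relevant part of $V(G)$ can be partitioned into cliques $V_1,\dots,V_r$, each contained in a maximum clique, in such a way that every maximum clique contains exactly one part and each $|V_i|$ is at least twice the outside degree of its vertices, then we are done. Indeed, deleting the edges inside the parts leaves a graph whose maximum degree equals the largest outside degree, and the theorem produces a set that chooses one vertex per part, has no edge between parts, and is therefore stable in $G$ while meeting every maximum clique. Note that with a full maximum clique as a part the size requirement reads $\omega \ge 2\bigl((\Delta+1)-\omega\bigr)$, i.e.\ $\omega \ge \frac23(\Delta+1)$, so the transversal step itself never needs the full strength of the hypothesis.

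I expect the construction of this clique partition to be the main obstacle, and to be exactly where the constant $\frac34$ (rather than $\frac23$) is spent. The tension is real: to keep outside degrees small a part must be an almost complete maximum clique, but overlapping maximum cliques share more than half their vertices, so their almost complete parts cannot be made disjoint, whereas shrinking a part down to a shared core inflates its vertices' outside degree past the budget. To resolve this I would process one connected component of the intersection graph of the maximum cliques at a time and, exploiting the rigid heavy-overlap structure above, assign each vertex of the component's union to a single part so that every maximum clique of the component keeps exactly one full part of the required size. Showing that the overlap is always strong enough to reconcile largeness, disjointness and the outside-degree bound simultaneously is the crux; Rabern's argument achieves this under $\omega \ge \frac34(\Delta+1)$, and the point of the present paper is that allowing parts of unequal size (lopsided independent transversals) lets one push the same scheme down to $\omega > \frac23(\Delta+1)$.
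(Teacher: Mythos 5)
Your skeleton---group the maximum cliques into connected components of their intersection graph, then extract a stable transversal via Haxell's theorem with $k=(\Delta+1)-\omega$---is exactly the right one, and your observation that the transversal step by itself only needs $\omega\ge\frac23(\Delta+1)$ is also correct. But the proposal stops precisely where the proof has to happen: you declare the construction of the clique partition to be ``the crux,'' and then discharge it by saying that ``Rabern's argument achieves this,'' which is circular, since Rabern's theorem is the statement being proved. Moreover, the tension you cite as the reason the construction is delicate is illusory. The parts should be the mutual intersections: for each component $\fc_i$ of the clique graph put $F_i=\bigcap_{C\in\fc_i}C$ and $D_i=\bigcup_{C\in\fc_i}C$, and apply Theorem \ref{thm:haxell} not inside $G$ but inside the induced subgraph $G[\bigcup_i F_i]$. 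A vertex $v\in F_i$ is universal in $G[D_i]$, so the number of its neighbours lying in other parts $F_j$ is at most $\Delta+1-|D_i|\le(\Delta+1)-\omega$; its many neighbours in $D_i\setminus F_i$ are simply not vertices of the auxiliary graph, so shrinking a part down to the shared core does \emph{not} inflate the relevant outside degree. Each maximum clique contains exactly one part (its own component's core), so a transversal of the $F_i$ hits every maximum clique. This is the idea your proposal is missing.

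What remains---and what your inclusion--exclusion remark cannot deliver---is a lower bound on $|F_i|$ for an \emph{entire component}, not just for two or three pairwise-intersecting cliques: cliques in one component need only be linked by a chain of intersections, and pairwise intersection alone does not produce a common vertex (there is no Helly property here). The two facts that close this gap are Hajnal's lemma (Lemma \ref{lem:hajnal}), $\left|\bigcap_j C_j\right|+\left|\bigcup_j C_j\right|\ge 2\omega(G)$ for any collection of maximum cliques, and Kostochka's lemma (Lemma \ref{lem:kostochka}), proved by induction along the component using Hajnal, which gives $|F_i|\ge 2\omega-(\Delta+1)$ whenever $\omega>\frac23(\Delta+1)$. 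With these in hand, Rabern's hypothesis is exactly Haxell's condition: taking $k=(\Delta+1)-\omega$, every vertex of $F_i$ has at most $k$ neighbours outside its part in $G[\bigcup_i F_i]$, and the requirement $|F_i|\ge 2\omega-(\Delta+1)\ge 2k$ is equivalent to $\omega\ge\frac34(\Delta+1)$. This is precisely the paper's proof of Theorem \ref{thm:main} with the lopsided Theorem \ref{thm:isr} replaced by plain Haxell; the paper's improvement to $\frac23$ comes from using the sharper bound $\min\{k,|F_i|-k\}$ on outside degrees, which Hajnal's inequality $|F_i|+|D_i|\ge 2\omega$ supplies.
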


Here we prove the best possible theorem of this type:

\begin{theorem}\label{thm:main}
If a graph $G$ satisfies $\omega(G)>\frac 23(\Delta(G)+1)$, then $G$ contains a stable set $S$ meeting every maximum clique.
\end{theorem}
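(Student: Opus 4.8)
The plan is to reduce the statement to the existence of an independent transversal in a vertex partition whose parts are cliques of unequal size, and then to supply a transversal theorem strong enough to certify that transversal under the hypothesis $\omega > \tfrac23(\Delta+1)$. Rewriting the hypothesis as $\Delta+1 < \tfrac32\omega$, the first observation is a degree bound: any vertex $v$ lying in a maximum clique $K$ has exactly $\omega-1$ neighbours inside $K$ and hence at most $\Delta-(\omega-1)=\Delta+1-\omega$ neighbours outside $K$, so its number of external neighbours is at most $e:=\Delta+1-\omega < \tfrac12\omega$.

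Next I would set up the clique structure. If two maximum cliques $K,K'$ meet, then a common vertex is adjacent to all of $K\cup K'$, so $|K\cup K'|\le\Delta+1$ and therefore $|K\cap K'| = 2\omega-|K\cup K'| \ge 2\omega-(\Delta+1) > \tfrac12\omega$. Since two subsets of a single $\omega$-set that each exceed half its size must intersect, ``meeting'' is a transitive relation on maximum cliques, so the maximum cliques partition into vertex-disjoint \emph{clusters}, with any two cliques in a common cluster meeting. A structural lemma (the details of which I would verify by induction, controlling sizes through the bound $|K\cup K'|\le\Delta+1$) should show that the cliques of a cluster $i$ share a nonempty common set $B_i$, which is itself a clique, and that $|B_i|$ is large relative to the cross-cluster degree. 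The reduction is then immediate: choosing one vertex $x_i\in B_i$ for each cluster hits every maximum clique of that cluster, so a stable set hitting all maximum cliques is exactly an \emph{independent transversal} of the partition $\{B_i\}$. Only edges between distinct clusters matter, and each vertex $x_i$ of $B_i$ is adjacent to every other vertex of its cluster's union $W_i$, hence has cross-cluster degree at most $\Delta+1-|W_i|$.

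The crux is the transversal theorem. Haxell's $2\Delta$ criterion would demand $|B_i|\ge 2e$, i.e.\ $\omega\ge\tfrac34(\Delta+1)$, recovering only Rabern's Theorem~\ref{thm:rabern}. To reach $\tfrac23$ I would instead prove and apply a \emph{lopsided} (local) independent-transversal theorem: an independent transversal exists provided each part $B_i$ has size at least twice the maximum cross-degree of its own vertices, a statement I expect to obtain through the topological-connectivity method of Aharoni and Haxell rather than a global degree count. This is exactly the condition $|B_i|\ge 2\,(\Delta+1-|W_i|)$, and I would check it holds for every cluster when $\omega>\tfrac23(\Delta+1)$. The extremal case is a cluster consisting of a single maximum clique, where $B_i=K$, $|B_i|=\omega$, and the cross-degree is at most $e$; here the requirement $\omega\ge 2e = 2(\Delta+1-\omega)$ is equivalent to $3\omega\ge 2(\Delta+1)$, i.e.\ to $\omega\ge\tfrac23(\Delta+1)$, which pins the constant. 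For multi-clique clusters the part $B_i$ shrinks but the cross-degree $\Delta+1-|W_i|$ shrinks in tandem, and I would confirm the inequality persists using $|W_i|\ge\omega\ge 2e$.

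I expect the lopsided transversal theorem, together with the verification that its hypotheses survive the clustering uniformly over all clusters, to be the main obstacle; the degree bound and the clustering are comparatively routine once the key inequality $|K\cup K'|\le\Delta+1$ is in hand. Finally, for the tightness claim I would exhibit, at the boundary $\omega=\tfrac23(\Delta+1)$, a graph built by arranging single-clique clusters (each with $|B_i|=\omega=2e$) in the extremal no-transversal pattern of Szab\'o and Tardos, so that no system of clique representatives is stable and hence no stable set meets all maximum cliques, showing the inequality in Theorem~\ref{thm:main} cannot be relaxed to equality.
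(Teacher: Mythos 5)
Your reduction is essentially the paper's: your ``clusters'' are the connected components of the clique graph, your $B_i$ and $W_i$ are the paper's $F_i$ and $D_i$, and the estimates you verify (the Hajnal-type inequality $|B_i|+|W_i|\ge 2\omega$, cross-degree at most $\Delta+1-\omega$, cross-degree at most $\Delta+1-|W_i|$) are exactly the ones the paper uses. The gap is the crux lemma you propose to prove and apply: the purely local transversal criterion ``an independent transversal exists provided each part has size at least twice the maximum cross-degree of its own vertices'' is \emph{false}, so no topological-connectivity argument will deliver it. The paper itself records the counterexample (attributed to Haxell): take $V_1$ of size four and $V_2,\ldots,V_5$ of size two, with each vertex of $V_1$ dominating one of the four smaller parts. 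Every vertex of $V_1$ has degree $2=\tfrac12|V_1|$, every other vertex has degree $1=\tfrac12|V_i|$, so your hypothesis holds in every part, yet there is no independent transversal: whichever vertex of $V_1$ is chosen dominates one small part entirely.

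What is true, and what your numerical verification actually establishes, is the hypothesis of Theorem \ref{thm:isr}: there is a single \emph{uniform} threshold $k=\tfrac13(\Delta+1)$ such that every vertex of every part $B_i$ has cross-degree at most $\min\{k,\,|B_i|-k\}$. Indeed, the cross-degree is at most $\Delta+1-\omega<k$, and by the Hajnal bound $|B_i|\ge 2\omega-|W_i|$ one gets cross-degree $\le \Delta+1-|W_i|\le |B_i|-\left(2\omega-(\Delta+1)\right)<|B_i|-k$, using precisely $\omega>\tfrac23(\Delta+1)$. The coupling of all parts through one global $k$ is not cosmetic: in the proof of Theorem \ref{thm:isr2} via Lemma \ref{lem:aharoni}, the totally dominating set splits as $D=X\cup Y$, the degrees of the partial-ISR part $Y$ are charged against $\sum_{i\in J}(|V_i|-k)$, and the degrees of $X$ are charged against $k|J|$; as the paper notes, one has no control over the degrees of vertices in $X$ relative to the sizes of the parts they dominate, which is exactly why the ratio-$\tfrac12$ (local) version fails. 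So your clustering and your inequalities are sound, but the step ``prove the local lopsided transversal theorem'' is irreparable as stated; replacing it by Theorem \ref{thm:isr} (whose min-form hypothesis your own calculations verify) recovers the paper's proof.
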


To see that this is best possible, let $G_k$ be the graph obtained by substituting every vertex of a 5-cycle with a clique of size $k$.  Then $\omega(G_k)=2k=\frac 23(\Delta(G_k)+1)$, and no stable set meets every maximum clique.  To prove Theorem \ref{thm:main} we apply Rabern's approach with a stronger final step.  Rabern applies Haxell's theorem \cite{haxell95}, which can be stated as follows:

\begin{theorem}[Haxell]\label{thm:haxell}
For a positive integer $k$, let $G$ be a graph with vertices partitioned into $r$ cliques of size $\geq 2k$.  If every vertex has at most $k$ neighbours outside its own clique, then $G$ contains a stable set of size $r$.
\end{theorem}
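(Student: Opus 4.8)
The plan is to prove Haxell's theorem directly by an extremal argument: take a partial independent transversal covering as many cliques as possible, assume it misses one, and build a \emph{dominated family} of cliques whose representatives provably cannot dominate enough vertices given the size and degree bounds. First I would reduce to a convenient graph. Let $H$ be obtained from $G$ by deleting every edge inside the cliques $V_1,\dots,V_r$, keeping only edges between distinct cliques. Since each $V_i$ is a clique in $G$, any stable set of $G$ meets each $V_i$ at most once, so a stable set of size $r$ is exactly a transversal of $V_1,\dots,V_r$ that is independent in $H$; conversely any such transversal is stable in $G$, because within-clique edges are irrelevant once only one vertex per clique is chosen. In $H$ every vertex has degree at most $k$ (its only edges are its at-most-$k$ cross-clique edges), and every part still has at least $2k$ vertices. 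It therefore suffices to find an independent transversal of $H$.

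Suppose none exists, and choose a partial independent transversal $I$ of $H$ (an independent set meeting each $V_i$ at most once) covering the maximum number of cliques; by assumption some clique $V_0$ is uncovered. I would grow a rooted family $\mathcal{B}$ of cliques, rooted at $V_0$, in which every non-root member is covered and carries its representative $v_i\in I$. Let $R=\{v_i : V_i\in\mathcal{B},\ V_i\neq V_0\}$ and call a vertex \emph{dominated} if it has a neighbour in $R$. If every non-representative vertex of $\bigcup\mathcal{B}$ is dominated, the family is complete and we stop; otherwise pick an undominated non-representative vertex $w\in\bigcup\mathcal{B}$. Such a $w$ lies outside $I$, since covered cliques contribute no vertex other than their representative and $V_0$ contributes none. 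If $w\in V_0$, then maximality of $I$ forbids adding $w$, so $w$ has a neighbour $z\in I$; as $w$ is undominated, $z\notin R$, so $z$ represents a clique $V_z\notin\mathcal{B}$, which I append to $\mathcal{B}$ with representative $z$. If instead $w$ lies in a covered member $V_i$, then either $w$ has a neighbour $z\in I\setminus\{v_i\}$ (again $z\notin R$, yielding a new clique to append) or $(I\setminus\{v_i\})\cup\{w\}$ is another maximum partial transversal and I perform this swap before continuing.

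The counting step is clean. If the process halts with a complete family containing $s\geq 0$ covered cliques, then $R$ has $s$ vertices and must dominate every non-representative vertex of $\bigcup\mathcal{B}$. The number of such vertices is at least $|V_0|+\sum_{i=1}^{s}(|V_i|-1)\geq 2k+s(2k-1)=(s+1)2k-s$, whereas the number of dominating incidences is at most $\sum_{v\in R}\deg_H(v)\leq sk$. Hence $(s+1)2k-s\leq sk$, i.e.\ $2k+s(k-1)\leq 0$, which is impossible for $k\geq 1$. This contradiction shows that $H$, and therefore $G$, contains the desired transversal.

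The main obstacle is not the counting but guaranteeing that the growth process terminates. Appending new cliques only enlarges $\mathcal{B}$, so that aspect is bounded by $r$; the real danger is the \emph{sideways swaps}, which replace $I$ by another transversal of the same cardinality and could in principle cycle. I would forestall this by fixing a linear order on the vertices of each clique and choosing $I$, among all maximum partial transversals, together with $\mathcal{B}$ so as to minimise a suitable potential (such as a lexicographic measure of the representatives and the committed vertices), then arguing that each forced swap strictly decreases this measure. Pinning down a potential that provably drops at every swap is the delicate heart of the proof; once it is in place, the alternation between ``append a clique'' and ``reach the counting contradiction'' closes the argument.
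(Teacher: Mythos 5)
Your reduction to independent transversals, your local moves, and your final counting are all sound: the estimate $2k + s(2k-1) \leq sk$ is exactly the degree-versus-domination count that closes the known proofs (it is the easy half of the paper's proof of Theorem \ref{thm:isr2}, where a totally dominating set $D = X\cup Y$ is defeated by $\sum_{v\in X}d(v)\leq k|J|$ and $\sum_{v\in Y}d(v)\leq \sum_{i\in J}(|V_i|-k)$). But the step you defer --- ``pinning down a potential that provably drops at every swap'' --- is not a detail to be filled in later; it is the entire difficulty of Haxell's theorem, and your framework makes it harder rather than easier. After you swap $v_i$ out for $w$, any vertex of another clique of $\mathcal{B}$ whose only neighbour in $R$ was $v_i$ becomes undominated, so the very next forced move may be the reverse swap; nothing in the local structure breaks this symmetry, and no lexicographic order on maximum partial transversals is known to do so. (The absence of any simple monotone quantity here is also why finding such transversals algorithmically under Haxell's hypothesis remained a hard problem long after the existence proof.)

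The paper (following Haxell and Aharoni--Berger--Ziv) terminates the process with structurally different bookkeeping, in Lemma \ref{lem:aharoni}. One does not maintain a maximum partial transversal and swap its representatives. Instead one keeps, at step $i$, a full ISR $R_i$ of $G[V_{[r-1]}]$ that is re-chosen from scratch subject to the \emph{agreement constraints} $R_i\cap N(x_j) = Y'_j$ for every previously committed vertex $x_j$, and, subject to those constraints, minimising the new neighbourhood $Y'_i = R_i\cap N(x_i)$. Freezing the neighbourhoods of all committed vertices is what makes progress monotone: the sets $Y_1\subset Y_2\subset\cdots$ grow strictly (the minimality of the earlier choice is exactly what forces $Y'_i\neq\emptyset$, via the same kind of exchange you are attempting), and a strictly increasing sequence of subsets of a finite vertex set is the potential. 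The price is proving that such constrained ISRs keep existing, which is where the undominated vertex $x_i$ and total domination enter. So your proposal is the right genus of argument --- grow a structure, then count degrees against domination --- but without this (or an equivalent) termination mechanism it has a genuine hole exactly where the theorem is hard.
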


To prove our theorem we need to deal with a graph that has been partitioned into cliques of unequal size.  We use the following extension of Theorem \ref{thm:haxell}:

\begin{theorem}\label{thm:isr}
For a positive integer $k$, let $G$ be a graph with vertices partitioned into cliques $V_1,\ldots V_r$.  If for every $i$ and every $v\in V_i$, $v$ has at most $\min\{k, |V_i|-k\}$ neighbours outside $V_i$, then $G$ contains a stable set of size $r$.
\end{theorem}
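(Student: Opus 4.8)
The plan is to reduce Theorem~\ref{thm:isr} to Haxell's theorem (Theorem~\ref{thm:haxell}) by a local modification that equalizes the clique sizes while preserving the key degree condition. The obstacle is that Haxell's theorem demands every clique have size at least $2k$ and every vertex have at most $k$ external neighbours, whereas here the bound $\min\{k,|V_i|-k\}$ is cleverly tuned to the size of each individual clique. In particular, a small clique $V_i$ with $|V_i|<2k$ is allowed, but then the constraint becomes $|V_i|-k<k$, which is much stronger than Haxell's; conversely a large clique permits up to $k$ external neighbours. My intuition is that this asymmetry is exactly what makes an independent-transversal argument go through, so I would aim to massage the instance into Haxell's hypothesis.

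\medskip
\noindent\textbf{Approach via padding small cliques.} First I would handle the cliques with $|V_i|<2k$. For such a clique the hypothesis gives each vertex at most $|V_i|-k$ external neighbours. The idea is to enlarge $V_i$ up to size exactly $2k$ by adding $2k-|V_i|$ new isolated ``dummy'' vertices that are adjacent to everything inside $V_i$ (so $V_i$ remains a clique) but have no edges leaving $V_i$. After padding, every original vertex $v\in V_i$ still has at most $|V_i|-k\le k$ external neighbours (the dummies are internal), and each dummy has zero external neighbours, so the Haxell condition $\deg_{\mathrm{ext}}(v)\le k$ holds with the new $|V_i|=2k$. For cliques already satisfying $|V_i|\ge 2k$, the hypothesis gives at most $\min\{k,|V_i|-k\}\le k$ external neighbours, so no modification is needed. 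Let $G'$ be the resulting graph with its refined partition into $r$ cliques each of size at least $2k$.

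\medskip
\noindent\textbf{Lifting the transversal back.} Applying Theorem~\ref{thm:haxell} to $G'$ yields a stable set $S'$ of size $r$, i.e.\ an independent transversal hitting each enlarged clique once. The final step is to argue that $S'$ can be taken to avoid the dummy vertices, so that it restricts to a genuine stable set of size $r$ in the original $G$. The clean way to ensure this is to make the dummies as restrictive as possible while keeping the degree bound intact: I would instead add the $2k-|V_i|$ dummies as a clique joined completely to $V_i$ \emph{and} joined to all of $V\setminus V_i$ that already see some vertex of $V_i$---but this risks violating degree bounds, so the safer route is to verify directly that any transversal of $G'$ avoiding dummies exists. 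Concretely, I expect the argument to run by contradiction on a minimal counterexample (mirroring the topological/inductive core of Haxell's proof): one shows that if no independent transversal avoiding dummies exists, some block of cliques is ``dominated'' in a way that contradicts the $\min\{k,|V_i|-k\}$ bound. The hard part will be making the reduction clean enough that the padded instance's independent transversal provably misses the dummies; if a black-box invocation of Haxell does not suffice, I would reprove the needed special case of Haxell's topological lemma with the dummies built in as forbidden representatives, which is where the bulk of the technical work lies.
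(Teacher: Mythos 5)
Your reduction has a genuine gap at exactly the point you flag, and it is not a fillable detail: after padding each small clique $V_i$ up to size $2k$ with dummies, Haxell's theorem applied to $G'$ only guarantees \emph{some} independent transversal $S'$, and nothing prevents $S'$ from representing a padded clique by one of its dummy vertices. In that case deleting the dummies leaves a stable set of size less than $r$ in $G$, and the reduction proves nothing. You cannot repair this by a local exchange: if $S'$ picks a dummy of the padded $V_i$, the other $r-1$ vertices of $S'$ may have up to $k$ external neighbours each, and since $|V_i|<2k$ they can jointly dominate all of the real vertices of $V_i$, so no replacement need exist. Nor can you build avoidance into the padding, since giving dummies edges into other cliques raises those vertices' external degrees, and Haxell's statement offers no ``forbidden vertex'' refinement to invoke. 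Your fallback --- reproving Haxell's lemma with dummies as forbidden representatives --- is not the bulk of the technical work; it is the \emph{entire} content of the theorem. As the paper itself remarks, Theorem \ref{thm:isr} is not a consequence of the statement of Theorem \ref{thm:haxell}, but of observations about its \emph{proof}.

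What the paper actually does: working in the complementary formulation where the classes are stable sets and ``degree'' plays the role of external degree, it uses (and strengthens) a result of Aharoni, Berger, and Ziv, Lemma \ref{lem:aharoni}, which gives a structural certificate for failure: if $G[V_{[r-1]}]$ has an ISR but $G$ has no ISR through a prescribed vertex, then some $J\subseteq[r-1]$ admits a totally dominating set $D=X\cup Y$ of $V_J\cup\{x_1\}$ in which $Y$ is a partial ISR of $V_J$ (at most one vertex per class) and $|X|\leq|Y|\leq|J|$. The lopsided bound is then exploited by a counting argument tailored to this bipartition: the at most $|J|$ vertices of $X$ contribute degree at most $k$ each, totalling at most $k|J|$, while the vertices of $Y$ lie in distinct classes $V_i$ and contribute at most $|V_i|-k$ each, totalling at most $\sum_{i\in J}(|V_i|-k)$; the two bounds sum to exactly $\sum_{i\in J}|V_i|$, whereas total domination of $V_J\cup\{x_1\}$ forces the degree sum over $D$ to exceed $\sum_{i\in J}|V_i|$, a contradiction. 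Your instinct that the asymmetry of $\min\{k,|V_i|-k\}$ is what makes the argument work is correct, but the mechanism is this $X$-versus-$Y$ split inside the domination certificate, which is invisible at the level of Haxell's theorem statement and cannot be recovered by padding.
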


Although this is not at all obvious, it is a straightforward consequence of observations made by Aharoni, Berger, and Ziv about the proof of Theorem \ref{thm:haxell} \cite{aharonibz07}.

\section{Hitting the maximum cliques}

To prove Theorem \ref{thm:main} we must investigate intersections of maximum cliques.  Given a graph $G$ and the set $\fc$ of maximum cliques in $G$, we define the {\em clique graph} $G(\fc)$ as follows.  The vertices of $G(\fc)$ are the cliques of $\fc$, and two vertices of $G(\fc)$ are adjacent if their corresponding cliques in $G$ intersect.  For a connected component $G(\fc_i)$ of $G(\fc)$, let $D_i\subseteq V(G)$ and $F_i\subseteq V(G)$  denote the union and the mutual intersection of the cliques of $\fc_i$ respectively, i.e.\ $D_i=\cup_{C\in \fc_i}C$ and $F_i=\cap_{C\in \fc_i}C$.

The proof uses three intermediate results.  The first, due to Hajnal \cite{hajnal65} (also see \cite{rabern09}), tells us that for each component of $G(\fc)$, $|D_i|+|F_i|$ is large:

\begin{lemma}[Hajnal]\label{lem:hajnal}
Let $G$ be a graph and $C_1,\ldots,C_r$ be a collection of maximum cliques in $G$.  Then
$$ \left| \bigcap_{i\leq r} C_i\right| + \left|\bigcup_{i\leq r} C_i\right| \geq 2\omega(G).$$
\end{lemma}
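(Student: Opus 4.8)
The plan is to induct on $r$, proving the slightly stronger statement that, along any ordering of the cliques, the quantity
$$ \Phi_m := \Bigl|\bigcap_{i\le m}C_i\Bigr| + \Bigl|\bigcup_{i\le m}C_i\Bigr| $$
is nondecreasing in $m$ and equals $2\omega(G)$ when $m=1$ (since the intersection and union both equal $C_1$, a clique of size $\omega(G)$). Writing $I=\bigcap_{i<m}C_i$ and $U=\bigcup_{i<m}C_i$, the definitions give $\bigcup_{i\le m}C_i = U\cup C_m$ and $\bigcap_{i\le m}C_i = I\cap C_m$, so a short inclusion–exclusion bookkeeping yields
$$ \Phi_m-\Phi_{m-1} = |C_m\setminus U| - |I\setminus C_m|. $$
Thus the whole lemma reduces to the single inequality $|C_m\setminus U|\ge |I\setminus C_m|$ for each $m$.

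To prove this reduced inequality, let $P=I\setminus C_m$ and $Q=C_m\setminus U$, and form the bipartite graph $H$ on $(P,Q)$ in which $p\in P$ and $q\in Q$ are joined exactly when they are \emph{non}adjacent in $G$. I would show $|Q|\ge|P|$ by producing an injection $P\hookrightarrow Q$, which by Hall's theorem follows from verifying $|N_H(S)|\ge |S|$ for every $S\subseteq P$. Fix such an $S$ and set $T=N_H(S)\subseteq Q\subseteq C_m$. The key observation is that $(C_m\setminus T)\cup S$ is a clique: $C_m\setminus T$ is a clique as a subset of $C_m$, and $S$ is a clique since $S\subseteq I\subseteq C_1$; moreover every $s\in S$ is adjacent to every $w\in C_m\setminus T$, because if $w\in C_m\cap U$ then $w\in C_i$ for some $i<m$ and $s\in I\subseteq C_i$ forces $sw$ to be an edge, while if $w\in Q\setminus T$ then $w\notin N_H(S)$ means $w$ is adjacent to all of $S$ by definition of $T$. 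Since $S\subseteq I\setminus C_m$ is disjoint from $C_m$, this clique has size $(\omega(G)-|T|)+|S|$, and maximality of $\omega(G)$ forces $\omega(G)-|T|+|S|\le \omega(G)$, i.e.\ $|T|\ge|S|$. This is exactly Hall's condition, so the desired injection exists.

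The main obstacle is precisely this Hall step. The naive use of clique maximality only shows that each individual $p\in I\setminus C_m$ has at least one non-neighbour inside $C_m\setminus U$ (giving $H$-degree $\ge 1$), which is far from an injection; the content of the argument is the clique-enlargement trick above, which upgrades the pointwise maximality into the full Hall inequality by simultaneously exploiting that $S\subseteq I$ is completely joined to $C_m\cap U$ and that $Q\setminus T$ is, by construction, completely joined to $S$. With the reduced inequality in hand, the bookkeeping identity shows $\Phi_r\ge\Phi_1=2\omega(G)$, completing the induction and proving the lemma.
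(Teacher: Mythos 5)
Your proof is correct, but there is no in-paper proof to compare it to: the paper quotes this lemma from Hajnal's 1965 paper (pointing to Rabern's paper for a proof in English) and never proves it. Measured against that classical argument, your proof shares the same skeleton --- induction on $r$ with the telescoping identity reducing everything to $|C_m\setminus U|\ge|I\setminus C_m|$, where $I=\bigcap_{i<m}C_i$ and $U=\bigcup_{i<m}C_i$ --- but the classical proof finishes in one line: $I\cup(C_m\cap U)$ is a clique, because any $v\in I$ and $w\in C_m\cap U$ lie together in some $C_i$ with $i<m$, and since $I\subseteq U$ the bound $|I\cup(C_m\cap U)|\le\omega(G)=|C_m|$ unpacks to exactly $|I\setminus C_m|\le|C_m\setminus U|$. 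Your appeal to Hall's theorem is valid but superfluous: in your own verification of Hall's condition, the single instance $S=P$ already ends the proof, since $N_H(P)\subseteq Q$ gives $|Q|\ge|N_H(P)|\ge|P|$ with no matching or injection required; quantifying over all $S\subseteq P$ and invoking Hall adds nothing to the counting statement. Indeed, your clique $(C_m\setminus N_H(P))\cup P$ is just the classical clique $P\cup(C_m\cap U)$ fattened by the vertices of $C_m\setminus U$ that are complete to $P$, so the combinatorial core --- clique enlargement played against the maximality of $\omega(G)$ --- is identical. What the Hall machinery buys is a strictly stronger conclusion, namely an injection from $I\setminus C_m$ into $C_m\setminus U$ along non-edges, which the lemma does not need, at the cost of an extra theorem and the full subset quantification.
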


The second is due to Kostochka \cite{kostochka80} (proven in English in \cite{rabern09}). It tells us that if $\omega(G)$ is sufficiently close to $\Delta(G)+1$, then $|F_i|$ is large:

\begin{lemma}[Kostochka]\label{lem:kostochka}
Let $G$ be a graph with $\omega(G)>\frac 23(\Delta(G)+1)$ and let $\fc$ be the set of maximum cliques in $G$.  Then for each connected component $G(\fc_i)$ of $G(\fc)$,
$$\left| \bigcap_{C\in \fc_i} C\right| \geq 2\omega(G)-(\Delta(G)+1).$$
\end{lemma}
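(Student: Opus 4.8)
The plan is to abbreviate $\omega=\omega(G)$ and $\Delta=\Delta(G)$ and to set $t=2\omega-(\Delta+1)$, the quantity to be bounded below. The hypothesis $\omega>\frac23(\Delta+1)$ says precisely that $2t-\omega=3\omega-2(\Delta+1)>0$, equivalently $t>\frac12\omega$; this single inequality (and in particular $t>0$) drives the whole argument. I would split the proof into two steps: a \emph{domination} step that upgrades nonemptiness of a common intersection to the bound $t$, and a \emph{nonemptiness} step showing that the cliques of a component share a vertex at all.

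For the domination step, suppose $\mathcal F$ is a family of maximum cliques with $\bigcap\mathcal F\neq\emptyset$, and pick $v\in\bigcap\mathcal F$. Since $v$ lies in every clique of $\mathcal F$, every vertex of $\bigcup\mathcal F$ is adjacent to $v$, so $\bigcup\mathcal F$ lies in the closed neighbourhood of $v$ and $\bigl|\bigcup\mathcal F\bigr|\leq\Delta+1$. Hajnal's Lemma~\ref{lem:hajnal} then yields $\bigl|\bigcap\mathcal F\bigr|\geq 2\omega-\bigl|\bigcup\mathcal F\bigr|\geq t$. Applied to a single pair, this already shows that any two intersecting maximum cliques meet in more than $\frac12\omega$ vertices, which I would use to prove that each component of $G(\fc)$ is complete: if $A,B$ and $B,C$ are adjacent, then $A\cap B$ and $B\cap C$ are subsets of $B$ each of size exceeding $\frac12|B|$, whence $|A\cap B\cap C|\geq|A\cap B|+|B\cap C|-|B|>0$ and $A,C$ intersect as well. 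Thus adjacency is transitive and all cliques of a component pairwise intersect.

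It then remains to show, by induction on $|\mathcal F|$, that every pairwise-intersecting family $\mathcal F$ of maximum cliques has $\bigl|\bigcap\mathcal F\bigr|\geq t$; applying this to $\fc_i$ proves the lemma. The base case is clear. For the step, write $\mathcal F=\{C_1,\dots,C_{m+1}\}$ and set $F'=C_1\cap\cdots\cap C_m$, so $|F'|\geq t$ by induction while $|C_1\cap C_{m+1}|\geq t$ by the domination step. Since $F'$ and $C_1\cap C_{m+1}$ both lie inside $C_1$, inclusion--exclusion within $C_1$ gives $\bigl|\bigcap\mathcal F\bigr|=|F'\cap C_{m+1}|\geq|F'|+|C_1\cap C_{m+1}|-\omega\geq 2t-\omega>0$. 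Hence $\bigcap\mathcal F$ is nonempty, and the domination step upgrades this to $\bigl|\bigcap\mathcal F\bigr|\geq t$.

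The genuine obstacle is the nonemptiness, and the subtle point is that it truly requires the graph structure rather than merely the ``more than half'' overlaps: one can choose four $6$-element sets that pairwise meet in $4$ vertices yet have empty common intersection, so no argument on intersection sizes alone can work. What rescues the proof is the interleaving of the two steps---each time nonemptiness is obtained from the weak bound $2t-\omega>0$, the domination step restores the full bound $t$ before the next clique is adjoined---together with the fact that $\omega>\frac23(\Delta+1)$ is exactly the threshold making $2t-\omega>0$.
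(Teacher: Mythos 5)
Your proof is correct. One thing to be aware of: the paper itself contains no proof of this lemma to compare against --- it is quoted as Kostochka's theorem \cite{kostochka80}, with an English proof given in \cite{rabern09} --- so what you have produced is a self-contained reconstruction of that cited argument, and it is essentially the standard one. The two ingredients match: (i) the domination step, where a common vertex $v$ forces $\bigcup\mathcal F\subseteq N[v]$, so $\bigl|\bigcup\mathcal F\bigr|\leq\Delta+1$ and Lemma~\ref{lem:hajnal} yields the bound $t=2\omega-(\Delta+1)$; and (ii) the nonemptiness step, where two subsets of a single $\omega$-clique, each of size at least $t$, must meet because $2t-\omega>0$, which is exactly the hypothesis $\omega>\frac23(\Delta+1)$. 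The only organizational difference is bookkeeping: the proof in \cite{rabern09} processes the cliques of a component along a connected ordering (each new clique meets some earlier one), interleaving (i) and (ii) at each step, whereas you first use (ii) to show that adjacency in $G(\fc)$ is transitive, so each component is a pairwise-intersecting family, and then run the same interleaved induction on such families. Both routes hinge on the point you correctly emphasize: the inclusion--exclusion step alone only gives nonemptiness and would decay under iteration, so the domination step must restore the full bound $t$ before each new clique is adjoined; your example of four $6$-element sets pairwise meeting in $4$ elements with empty common intersection accurately shows that no argument on intersection sizes alone could succeed.
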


The third intermediate result is Theorem \ref{thm:isr}.  Combining them to prove Theorem \ref{thm:main} is a simple matter.

\begin{proof}[Proof of Theorem \ref{thm:main}]
Let $\fc$ be the set of maximum cliques of $G$, and denote the connected components of $G(\fc)$ by $G(\fc_1),\ldots,G(\fc_r)$.  For each $\fc_i$, let $F_i = \cap_{C\in \fc_i}C$ and let $D_i = \cup_{C\in \fc_i}C$.  It suffices to prove the existence of a stable set $S$ in $G$ intersecting each clique $F_i$.  

Lemma \ref{lem:kostochka} tells us that $|F_i|>\frac 13(\Delta(G)+1)$.  Consider a vertex $v\in F_i$, noting that $v$ is universal in $G[D_i]$.  By Lemma \ref{lem:hajnal}, we know that $|F_i|+|D_i|>\frac 43(\Delta(G)+1)$.  Therefore $\Delta(G)+1-|D_i|<|F_i|-\frac 13(\Delta(G)+1)$, so $v$ has fewer than $|F_i|-\frac 13(\Delta(G)+1)$ neighbours in $\cup_{j\neq i}F_i$.  Furthermore $v$ certainly has fewer than $\frac 13(\Delta(G)+1)$ neighbours in $\cup_{j\neq i}F_i$.

Now let $H$ be the subgraph of $G$ induced on $\cup_i F_i$, and let $k=\frac 13(\Delta(G)+1)$.  Clearly the cliques $F_1,\ldots,F_r$ partition $V(H)$.  A vertex $v\in F_i$ has at most $\min\{k,|F_i|-k\}$ neighbours outside $F_i$.  Therefore by Theorem \ref{thm:isr}, $H$ contains a stable set $S$ of size $r$.  This set $S$ intersects each $F_i$, and consequently it intersects every clique in $\fc$, proving the theorem.
\end{proof}

It remains to prove Theorem \ref{thm:isr}.  We do this in the next section.

\section{Independent transversals with lopsided sets}\label{sec:isr}

Suppose we are given a finite graph whose vertices are partitioned into stable sets $V_1,\ldots, V_r$.  An {\em independent system of representatives} or {\em ISR} of $(V_1,\ldots,V_r)$ is a stable set of size $r$ in $G$ intersecting each $V_i$ exactly once.  A {\em partial ISR}, then, is simply a stable set in $G$ intersecting no $V_i$ more than once.  ISR's are intimately related to both the strong chromatic number \cite{haxell04} and list colourings \cite{haxell01}.

A {\em totally dominating set} $D$ is a set of vertices such that every vertex of $G$ has a neighbour in $D$, including the vertices of $D$.  Given $J\subseteq [m]$, we use $V_J$ to denote $(V_i \mid i\in J)$.  Given $X\subseteq V(G)$, we use $I(X)$ to denote the set of partitions intersected by $X$, i.e.\ $I(X) = \{ i\in [r] \mid V_i\cap X \neq \emptyset \}$.  For an induced subgraph $H$ of $G$, we implicitly consider $H$ to inherit the partitioning of $G$.

To prove our lopsided existence condition for ISR's, we use a consequence of Haxell's proof of Theorem \ref{thm:haxell} \cite{haxell95} pointed out (and proved explicitly) by Aharoni, Berger, and Ziv \cite{aharonibz07}.  Actually we prove a slight strengthening of their result:

\begin{lemma}\label{lem:aharoni}
Let $x_1$ be a vertex in $V_r$, and suppose $G[V_{[r-1]}]$ has an ISR.  Suppose there is no $J\subseteq [r-1]$ and $D\subseteq V_J\cup \{x_1\}$ totally dominating $V_{J}\cup \{x_1\}$ with the following properties:
\begin{enumerate}
\item $D$ is the union of disjoint stable sets $X$ and $Y$.
\item $Y$ is a (not necessarily proper) partial ISR for $V_J$.  Thus $|Y|\leq |J|$.
\item Every vertex in $Y$ has exactly one neighbour in $X$.  Thus $|X|\leq |Y|$.
\item $X$ contains $x_1$.
\end{enumerate}
Then $G$ has an ISR containing $\{x_1\}$.
\end{lemma}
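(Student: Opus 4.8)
The plan is to run the alternating-tree augmentation that underlies Haxell's proof of Theorem~\ref{thm:haxell}, in the combinatorial form extracted by Aharoni, Berger and Ziv, but anchored throughout at the prescribed vertex $x_1$; it is precisely this anchoring that yields the strengthening (the output ISR \emph{contains} $x_1$, and condition~(4) of the statement). Throughout the run I maintain disjoint stable sets $X$ and $Y$ together with $J:=I(Y)\subseteq[r-1]$, subject to the standing invariant that $x_1\in X$, that $Y$ is a partial ISR for $V_J$, that every vertex of $Y$ has exactly one neighbour in $X$, and that the parent map sending each $y\in Y$ to its unique $X$-neighbour is surjective onto $X\setminus\{x_1\}$ (so $|X|\le|Y|+1$). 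In other words, at every moment the pair $(X,Y)$ is an \emph{almost-forbidden} configuration: it already satisfies conditions~(1)--(4) except possibly for total domination of $V_J\cup\{x_1\}$, and except that the root $x_1$ need not yet have a child. I would initialise with the degenerate state $X=\{x_1\}$, $Y=\emptyset$, $J=\emptyset$. The hypothesis that $G[V_{[r-1]}]$ has an ISR $R$ is what drives the process: the vertices displaced during the search are drawn against the backbone $R$, and it is $R$ that supplies a conflict-free replacement whenever the search reaches an undominated vertex, so that the augmentation genuinely closes up into a larger partial ISR rather than stalling.

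The engine is the following dichotomy applied to the current pair. If $X\cup Y$ totally dominates $V_J\cup\{x_1\}$, then in particular $x_1$ is dominated, so it has a neighbour in $Y$; upgrading the parent map to be surjective onto all of $X$ gives $|X|\le|Y|\le|J|$, and one checks that $(J,\,D=X\cup Y)$ together with $X$ and $Y$ witnesses exactly the configuration forbidden by hypothesis --- conditions~(1)--(4) all hold --- a contradiction, so this case never arises. Otherwise some $v\in V_J\cup\{x_1\}$ has no neighbour in $X\cup Y$, and I use $v$ to augment: either $v$ can be installed as a representative completing an alternating path back to $x_1$, flipping that path to produce a strictly larger partial ISR that still contains $x_1$ and covers a new class; or $v$ merely attaches as a new leaf, enlarging $Y$ and $X$ and strictly increasing $|J|$. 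Since $|J|\le r-1$, the leaf-attachment phase terminates; each completed augmentation strictly increases the number of classes covered by a partial ISR through $x_1$, a quantity bounded by $r$; and the dominating case is excluded. Hence the process must halt having produced a full ISR containing $x_1$.

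\textbf{The main obstacle.} The delicate part is the bookkeeping of the augmenting step rather than the final reduction, which is immediate. I would need to define the path-flip so that three things survive simultaneously: $x_1$ never leaves $X$ and remains the root (securing both condition~(4) in the terminal analysis and $x_1$ in the output); the condition ``exactly one neighbour in $X$'' is maintained --- not merely ``at least one'' --- which constrains how a new leaf may be linked and may force pruning of the tree; and the parent map stays surjective onto $X\setminus\{x_1\}$ so that the halting configuration certifies $|X|\le|Y|$ via condition~(3). Pinning down a progress measure that is monotone across the tree rebuilds, and verifying that the undominated vertex $v$ always triggers one of the two advertised moves --- this is exactly where the ambient ISR $R$ of $V_{[r-1]}$ is indispensable --- is where essentially all the work lies.
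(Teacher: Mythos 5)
Your framework is the right one---it is essentially the skeleton of the paper's own proof (grow a stable set $X$ of tree vertices rooted at $x_1$ together with a set $Y$ of blocked representatives; if $X\cup Y$ ever totally dominates $V_{I(Y)}\cup\{x_1\}$ you have exactly the forbidden configuration, so an undominated vertex always exists)---but the proposal stops exactly where the lemma's content begins, and you concede this yourself (``this is where essentially all the work lies''). Two things are missing, and neither is routine. First, when an undominated vertex $v$ is adjoined to $X$, you never say where its new children in $Y$ come from, nor why they form a partial ISR satisfying your invariants (exactly one neighbour in $X$, disjointness, stability). The paper's mechanism is to carry along, at every step $i$, a \emph{full} ISR $R_i$ of $G[V_{[r-1]}]$, re-chosen subject to the constraints $R_i\cap N(x_j)=Y'_j$ for all $j<i$ and, subject to that, minimizing $|R_i\cap N(x_i)|$; the children of $x_i$ are then \emph{defined} as $Y'_i:=R_i\cap N(x_i)$, and because $Y_i\subseteq R_i$ the invariants are automatic. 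Your ``backbone $R$'' is mentioned but never built into the state or into the moves, so the leaf-attachment step is simply not defined.

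Second, your progress measure does not exist as stated: you appeal to ``the number of classes covered by a partial ISR through $x_1$,'' but no partial ISR containing $x_1$ is part of your maintained data ($x_1$ sits in $X$, which is disjoint from the representatives), and since a ``flip'' may force pruning, $|J|$ is not monotone either---this is the notoriously delicate point in Haxell-type arguments. The paper avoids inventing a terminating potential altogether: assuming $G$ is a counterexample (no ISR contains $x_1$), it proves that $Y'_i\neq\emptyset$ at \emph{every} step, via an exchange argument---if $x_i$ had no neighbour in $R_i$, then replacing the representative $y$ of $x_i$'s class in $R_i$ by $x_i$ would produce an ISR contradicting the minimality in the choice of $R_j$ for the unique $j$ with $x_i\in V_{I(Y'_j)}$---so that $Y_1\subset Y_2\subset\cdots$ is an infinite strictly increasing chain of partial ISRs in a finite graph, a contradiction. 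That exchange-against-minimality step, together with the constrained re-choice of $R_i$ that makes it available, is the heart of the lemma, and it is precisely the piece your proposal defers.
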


\begin{proof}
Let $G$ be a minimum counterexample; we can assume $G = G[V_{[r-1]}\cup \{x_1\}]$.  Furthermore, $r>1$ otherwise the lemma is trivial.  Let $R_1$ be an ISR of $G[V_{[r-1]}]$ chosen such that the set $Y'_1 = Y_1 = R_1 \cap N(x_1)$ has minimum size.  We know that $R_1$ exists because $G[V_{[r-1]}]$ has at least one ISR, and we know that $Y'_1$ is nonempty because $G$ does not have an ISR.  Now let $X_1 = \{x_1\}$ and let $D_1=X_1\cup Y_1$.

We now construct an infinite sequence of partial ISRs $Y_1 \subset Y_2 \subset \ldots$, which contradicts the fact that $G$ is finite.  Let $i > 1$, and suppose we have sets $\{R_j, Y_j, X_j \mid 1\leq j <i\}$ such that:
\begin{itemize}
\item $X_j$ is a stable set consisting of distinct vertices $\{x_1, \ldots, x_j\}$.  For $j>1$, $x_j$ is a vertex in $G[V_{I(Y_{j-1})}]$ with no neighbour in $X_{j-1}\cup Y_{j-1}$.

\item $R_j$ is an ISR of $G[V_{[r-1]}]$ such that for every $1 \leq \ell < j$, $R_j \cap N(X_\ell) = Y_\ell$.  Subject to that, $R_j$ is chosen so that $Y'_j = R_j \cap N(x_j)$ is minimum.  For $1 \leq j < i$, $Y'_j$ is nonempty.

\item $Y_j = \cup_{i=1}^j Y'_j$.
\end{itemize}

To find $x_i$, $Y'_i$, and $R_i$, we proceed as follows.
\begin{enumerate}
\item Let $x_i$ be any vertex in $G[V_{I(Y_{i-1})}]$ with no neighbour in $X_{i-1}\cup Y_{i-1}$.  We know that $x_i$ exists, otherwise the set $D_{i-1}=X_{i-1}\cup Y_{i-1}$ would be a total dominating set for $G[V_{I(Y_{i-1})} \cup \{x_1\}]$, contradicting the fact that $G$ is a counterexample.

\item Let $R_i$ be an ISR of $G[V_{[r-1]}]$ chosen so that for all $1\leq j < i$, $R_i \cap N(x_j) = R_j \cap N(x_j) = Y'_j$.  Subject to that, choose $R_i$ so that $Y'_i = R_i\cap N(x_i)$ is minimum.  We know that $R_i$ exists because $R_{i-1}$ is a possible candidate for the ISR.

\item It remains to show that $Y'_i$ is nonempty, i.e.\ that $Y_i \neq Y_{i-1}$.  Suppose $Y'_i = \emptyset$.  We will show that this contradicts our choice of $R_j$ for the unique $j<i$ such that $x_i \in V_{I(Y'_j)}$.  Let $y$ be the unique vertex in $R_i \cap V_{I(x_i)}$.  Construct $R'_j$ from $R_i$ by removing $y$ and inserting $x_i$.  Now for every $\ell$ such that $1\leq \ell < j$, $R'_j \cap N(x_\ell) = Y'_\ell = R_j \cap N(x_\ell)$.  For $j$, $R'_j \cap N(x_j) = (R_j \cap N(x_j))\setminus \{ y\}$, a contradiction.  Thus $Y'_i$ is nonempty.

\item Set $X_i = X_{i-1}\cup \{x_i\}$ and $Y_i = Y_{i-1}\cup Y'_i$.
\end{enumerate}
This choice of $X_i$, $R_i$, and $Y_i$ sets up the conditions so that we can repeat our argument indefinitely for increasing $i$, a contradiction since $G$ is finite.
\end{proof}

The lemma easily implies Theorem 3.5 in \cite{aharonibz07}, and allows us to prove a strengthening of Theorem \ref{thm:isr}:

\begin{theorem}\label{thm:isr2}
Let $k$ be a positive integer and let $G$ be a graph partitioned into stable sets $(V_1,\ldots,V_r)$.  If for each $i\in [r]$, each vertex in $V_i$ has degree at most $\min\{k,|V_i|-k\}$, then for any vertex $v$, $G$ has an ISR containing $v$.
\end{theorem}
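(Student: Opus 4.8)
The plan is to induct on the number of parts $r$ and reduce each step to Lemma~\ref{lem:aharoni}. Relabel the parts so that the prescribed vertex $v$ lies in $V_r$, and set $x_1 = v$. If $r = 1$ then $\{v\}$ is already an ISR, so assume $r \ge 2$. To invoke Lemma~\ref{lem:aharoni} with this $x_1$ I must verify its two hypotheses: that $G[V_{[r-1]}]$ has an ISR, and that no ``bad'' totally dominating set $D$ of the stated form exists. The lemma then delivers an ISR of $G$ containing $v$, which is exactly what we want.

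The first hypothesis is immediate from the induction. In $G[V_{[r-1]}]$ the parts $V_1,\ldots,V_{r-1}$ are unchanged in size, and every vertex has degree at most its degree in $G$, hence still at most $\min\{k,|V_i|-k\}$. So $G[V_{[r-1]}]$ satisfies the hypotheses of the theorem with the same $k$ and fewer parts, and by induction it has an ISR.

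The one nontrivial step, and the step I expect to be the main obstacle, is ruling out a bad $D$; this is precisely where the lopsided degree bound earns its keep. Suppose toward a contradiction that there are $J \subseteq [r-1]$ and $D = X \cup Y \subseteq V_J \cup \{x_1\}$ with the four listed properties, totally dominating $W := V_J \cup \{x_1\}$. Total domination means $W \subseteq N(D)$, so $|W| \le |N(D)| \le \sum_{d\in D}\deg_G(d)$. The idea is to bound this right-hand side from above by splitting $D$ into $X$ and $Y$ and applying a \emph{different} half of the degree bound to each piece. Every vertex of $X$ (including $x_1$) satisfies $\deg_G \le k$, giving $\sum_{d\in X}\deg_G(d) \le k|X|$. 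Since $Y$ is a partial ISR for $V_J$, it meets each part of $I(Y)\subseteq J$ exactly once, so the complementary bound $\deg_G(d) \le |V_i|-k$ for $d\in V_i$ yields $\sum_{d\in Y}\deg_G(d) \le \sum_{i\in I(Y)}|V_i| - k|Y|$. Adding these and invoking $|X|\le|Y|$ to cancel the $k$-terms,
$$\sum_{d\in D}\deg_G(d) \;\le\; k|X| + \sum_{i\in I(Y)}|V_i| - k|Y| \;\le\; \sum_{i\in I(Y)}|V_i| \;\le\; \sum_{i\in J}|V_i| = |W| - 1.$$
This contradicts $|W| \le \sum_{d\in D}\deg_G(d)$, so no bad $D$ exists and Lemma~\ref{lem:aharoni} applies, completing the induction. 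The crux, then, is the observation that the two halves $k$ and $|V_i|-k$ of the lopsided bound line up exactly with the two halves $X$ and $Y$ of $D$, so that the inequality $|X|\le|Y|$ supplied by the lemma makes the count close with precisely one vertex to spare.
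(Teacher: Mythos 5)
Your proposal is correct and follows essentially the same route as the paper's proof: invoke Lemma~\ref{lem:aharoni} (with the sub-ISR hypothesis supplied by induction, where the paper uses a minimum counterexample), then refute a bad totally dominating set $D = X \cup Y$ by charging the degree bound $k$ to $X$ and the bound $|V_i|-k$ to $Y$, using $|X|\le|Y|$ to close the count against total domination. The only differences (tracking $I(Y)$ rather than all of $J$, and using $|X|\le|Y|$ instead of $|X|\le|J|$) are cosmetic.
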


\begin{proof}
Suppose $G$ is a minimum counterexample for a given value of $k$.  Clearly we can assume each $V_i$ has size greater than $k$, and that $G[V_J]$ has an ISR for all $J\subset [r]$.  Take $v$ such that $G$ does not have an ISR containing $v$; we can assume $v\in V_r$.  By Lemma \ref{lem:aharoni}, there is some $J \subseteq [r-1]$ and a set $D\subseteq V_J\cup \{v\}$ totally dominating $V_J\cup \{v\}$ such that (i) $D$ is the union of disjoint stable sets $X$ and $Y$, (ii) $Y$ is a partial ISR of $V_J$, (iii) $|X|\leq |Y|\leq |J|$, and (iv) $v\in X$.

Since $D$ totally dominates $V_J\cup \{v\}$, the sum of degrees of vertices in $D$ must be greater than the number of vertices in $V_J$.  That is, $\sum_{v\in D}d(v) > \sum_{i\in J}|V_i|$.  Clearly $\sum_{v\in X}d(v) \leq k\cdot |J|$ and $\sum_{v\in Y}d(v) \leq \sum_{i\in J}(|V_i|-k)$, so $\sum_{v\in D}d(v) \leq \sum_{i\in J}|V_i|$, contradicting the fact that $D$ is totally dominating.  This proves the theorem.
\end{proof}

This extends Haxell's theorem by bounding the difference between the degree of a vertex and the size of its partition.  One might hope that bounding the ratio of these by $\frac 12$ is enough, but it is not:  Given $V_1$ of size four and $V_2,\ldots,V_5$ of size two, in which each vertex of $V_1$ dominates one of the smaller sets, there exists no ISR \cite{haxellpc}.  Lemma \ref{lem:aharoni} cannot imply such a result because in the totally dominating set $D = X\cup Y$, we have no control over the average degree of a vertex in $X$ -- it may be $k$.  So while we know that the average degree of a vertex in $Y$ behaves nicely with respect to the average partition size, the same is not necessarily true of $X$.  Thus Theorem \ref{thm:isr2} gives a lopsided existence condition that is not only a useful consequence of Lemma \ref{lem:aharoni}, but also a natural one.

\section{Acknowledgements}

The author is grateful to Landon Rabern and Penny Haxell for helpful discussions, and to Robert Himmelmann for pointing out an error in an earlier version of the paper.

\bibliography{masterbib}
\end{document}